\newtheorem{thm}{Theorem}[section]
\newtheorem{prop}[thm]{Proposition}
\theoremstyle{definition}
\newtheorem{defn}[thm]{Definition}
\theoremstyle{remark}
\numberwithin{equation}{section}
\newcommand{\Gram}{\mathcal{G}}
\newcommand{\NtSet}{\mathcal{N}}
\newcommand{\Nt}{N}
\newcommand{\T}{t}
\newcommand{\Voc}{\Sigma}
\newcommand{\Axiom}{\mathcal{S}}
\newcommand{\ProdRules}{\mathcal{P}}
\newcommand{\Lang}[1]{\mathcal{L}(#1)}
\newcommand{\Imm}{\mathcal{L}^{\lhd}}
\newcommand{\LangImm}[1]{\mathcal{L}^{\lhd}(#1)}
\newcommand{\BigO}[1]{\mathcal{O}(#1)}
\newcommand{\Def}[1]{{\bf #1}}
\newcommand{\UnG}[2]{#1\;|\;#2}
\newcommand{\PrG}[2]{#1\;.\;#2}
\newcommand{\Production}{\rightarrow}
\newcommand{\Derive}{\Rightarrow}
\newcommand{\DerPol}{\phi}
\newcommand{\AlgW}{\mathcal{A}}
\newcommand{\VectSizes}{\mathbf{n}}
\newcommand{\Remark}[1]{{\bf Remark #1:}}
\newcommand{\Prob}[1]{\mathbb{P}(#1)}
\newcommand{\Expect}[1]{\mathbb{E}(#1)}
\newcommand{\Forb}{\mathcal{F}}
\newcommand{\Pond}{\pi}
\newcommand{\Eqdef}{:=}
\newcommand{\GramW}{\mathcal{G}_\Pond}
\newcommand{\PrefixTree}{\mathcal{T}}
\begin{document}

\title{Non-redundant random generation from weighted context-free languages}%
\author{Yann Ponty\\Department of biology\\Boston College\\Chestnut Hill, MA 02467, USA\\ponty@bc.edu}%


\maketitle
\begin{abstract}
	We address the non-redundant random generation of $k$ words of length $n$ from a context-free language.
	Additionally, we want to avoid a predefined set of words. We study the limits of a rejection-based
	approach, whose time complexity is shown to  grow exponentially in $k$ in some cases. We propose an
	alternative recursive algorithm, whose careful implementation allows for a non-redundant generation of $k$
	words of size $n$ in $\mathcal{O}(kn\log{n})$ arithmetic operations after the precomputation of
	$\Theta(n)$ numbers. The overall complexity is therefore
	dominated by the generation of $k$ words, and the non-redundancy comes at a negligible cost.
	\end{abstract}
\thispagestyle{empty} \pagestyle{empty}

\section{Introduction}
	The random generation of combinatorial objects has many direct applications in areas
	ranging from software engineering~\cite{Den06} to bioinformatics~\cite{PoTeDe06}.
	It can help formulate conjectures on the average-case complexity of algorithms,
	raises new fundamental mathematical questions, and directly benefits from new
	discoveries of its fundamental objects. These include, but are not limited to, generating
	functionology, arbitrary precision arithmetics and bijective combinatorics. Following the
	so-called \emph{recursive} framework introduced by Wilf~\cite{wilf77},
	very elegant and general algorithms for the uniform random generation have been designed~\cite{flajoletcalculus} and implemented.
	Many optimizations of this approach been developed, using specificities of certain classes of
	combinatorial structures~\cite{Gol95}, or floating-point arithmetics~\cite{DeZi99}. More recently
	a probabilistic approach to this problem, the so-called Boltzmann generation~\cite{fullboltz},
	has drawn much attention both because its very low memory complexity and its
	underlying theoretical beauty.

	For many applications, it is necessary to drift away from the \emph{uniform} models.
	A striking example lies in the most recent paradigm for the \emph{in silico} analysis of the RNA molecule's folding.
	Instead of trying to predict a structure of minimal free-energy, current approaches tend to focus on the \emph{ensemble
	properties} of achievable conformations, assuming a Boltzmann probability distribution~\cite{DiLa03}.
	Random generation is then performed, and complex structural features are evaluated in a
	statistical manner. In order to capture such features, a general non-uniform scheme was proposed by Denise
	\emph{et al}~\cite{deniserandom}, based on the concept of \emph{weighted context-free grammars}. Recursive random
	generation algorithms were derived, with time and space complexities similar to that of the uniform ones~\cite{flajoletcalculus}.

	In the weighted probability distribution, the probability ratio between the most and least frequent words
	sometimes grows exponentially on the size of the generated objects. Therefore it is a natural question to address
	the \Def{non-redundant random generation} of combinatorial objects, that is the generation of a
	set of \Def{distinct} objects. By contrast to the general case, this aspect of random generation
	has, to our best knowledge, only  been addressed through the introduction of the {\tt PowerSet} construct by
	Zimmermann~\cite{ZimPowerset95}. An algorithm in $\Theta(n^2)$ arithmetic operations, or a
	practical $\Theta(n^4)$ complexity in this case, was derived for recursive decomposable structures.
	The absence of redundancy in the set of generated structures was achieved respectively through \emph{rejection}
	or an \emph{unranking} algorithms. While the former is discussed later on in the document, the latter cannot
	be transposed directly to the case of weighted languages, since the assumption that different integral ranks
	correspond to different objects does not hold. Furthermore, the algorithm cannot \emph{immediately} account for
	an initial set of forbidden words, short of computing an intersection grammar, a computationally intensive process
	that would further induce large time and memory constants.

	In this paper, we address the non-redundant generation of words from a context-free language, generated while avoiding
	a pre-defined inclusive set $\Forb$. First, we define some concepts and notations, which allows us
	to rephrase the random generation process as a \emph{step-by-step} walk. Then, we investigate the
	efficiency of a rejection-based approach to our problem. We show that, although well-suited for the uniform
	distribution, the rejection approach can yield high average-case complexities for large sets
	of forbidden words. In the weighted case, we show that the complexity of the rejection
	approach can grow exponentially on the number of desired sequences. Finally, we introduce a new algorithm,
	based on a recursive approach, which generates $k$ sequences of length $n$ while avoiding a set $\Forb$
	at the cost of $\mathcal{O}(kn\log(n))$ arithmetic operations after a precomputation in $\Theta((n+|\Forb|)n)$
	arithmetic operations.

\section{Concepts and notations}
	\subsection{Context-free grammars}
	We remind some formal definitions on context-free grammars and Chomsky Normal Form (CNF).
	A \Def{context-free grammar} is a 4-tuple $\Gram=(\Voc,\NtSet,\ProdRules,\Axiom)$ where
	\begin{itemize}
		\item $\Voc$ is the alphabet, i.e. a finite set of terminal symbols.
		\item $\NtSet$ is a finite set of non-terminal symbols.
		\item $\ProdRules$ is the finite set of production rules, each of the form $\Nt\Production X$,
		for $\Nt\in\NtSet$ any non-terminal and $X\in \{\Voc\cup\NtSet\}^*$.
		\item $\Axiom$ is the \Def{axiom} of the grammar, i. e. the initial non-terminal.
	\end{itemize}
	A grammar $\Gram$ is then said to be in \Def{Chomsky Normal Form} (CNF) iff the rules
	associated to each non-terminal $\Nt\in\NtSet$ are either:
	\begin{itemize}
		\item Product case: $\Nt \Production \PrG{\Nt'}{\Nt''}$
		\item Union case: $\Nt \Production \UnG{\Nt'}{\Nt''}$
		\item Terminal case: $\Nt \Production \T$
	\end{itemize}
	for $\Nt,\Nt',\Nt''\in\NtSet$ non-terminal symbols and $\T\in\Voc$ a terminal symbol.
	In addition, the axiom $\Axiom\in\NtSet$ is allowed to derive the empty
	word $\varepsilon$ only if $\Axiom$ does not appear in the right-hand side of any
	production.

	Let $\Lang{\Nt}$ be the language associated to $\Nt\in\NtSet$, that is the set of words on terminal symbols,
	accessible through a sequence of derivations starting from $\Nt$.
	Then the language $\Lang{\Gram}$ generated by a grammar $\Gram=(\Voc,\NtSet,\ProdRules,\Axiom)$
	is defined to be the language $\Lang{\Axiom}$ associated with its axiom $\Axiom$.
	It is a classic result that any context-free grammar $\Gram$ can be transformed into a grammar $\Gram'$ in Chomsky Normal Form (CNF)
	such that $\Lang{\Gram} = \Lang{\Gram'}$. Therefore, we will define our
	algorithm on CNF grammars, but will nevertheless illustrate its behavior on a compact, non-CNF, grammar for Motzkin words.
	Indeed the normalization process introduces numerous non-terminals even for the most trivial grammars.

	\subsection{Fixed-length languages descriptions: Immature words}
	We call \Def{mature} word a sequence of terminal symbols. More generally, we will call \Def{immature}
	a word that contains both terminal and non-terminal symbols, thus potentially requiring some further
	processing before becoming a mature word. We will denote $\LangImm{\Nt}$ the set of immature words
	accessible from a non-terminal symbol $\Nt$ and extend this notion to $\LangImm{\Gram}$
	the immature words accessible from the axiom of a grammar $\Gram$. It is noteworthy that
	$\Lang{\Gram}\subset\LangImm{\Gram}$.

	We can then \Def{attach required lengths} to the symbols of an immature word.
	For instance, in the grammar for Motzkin words from figure~\ref{fig:MotzkinWalks},
	$c\;a\;S_4\;b\;S_0$ will be a specialization of the immature word $c\;a\;S\;b\;S$,
	where the words generated from the first (resp. second) instance of the non-terminal $S$
	are required to have length $4$ (resp. $0$). Formally, this amounts to taking into consideration
	couples of the form  $(\omega,\VectSizes)$ where $\omega$ is an immature word, and
	$\VectSizes\in\mathbb{N}^{|\omega|}$ is a vector of sizes for words generated from the different
	symbols of $\omega$. We naturally extend the notion of language associated with an immature word
	to these couples in the following way:
	$$ \Lang{\omega,\VectSizes} = \prod_{i\ge 1}^{|\omega|}{\Lang{\omega_i,\VectSizes_i}} $$
	The length vector ${\bf n}$ associated with an immature word $\omega$ may
	be omitted in the rest of the document for the sake of simplicity.

		\begin{figure}[t]
				\begin{center}
				\fbox{
					 \begin{tabular}{ccc}
					 \includegraphics[height=4cm]{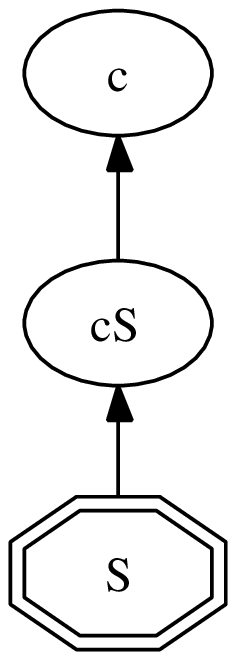} & \includegraphics[height=4cm]{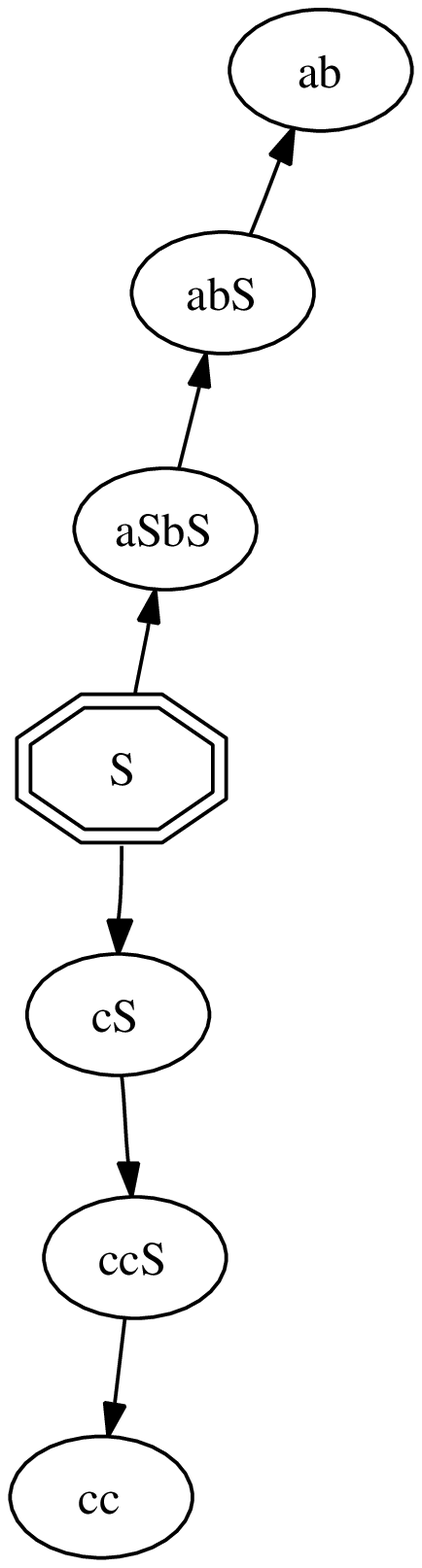} & \includegraphics[height=4cm]{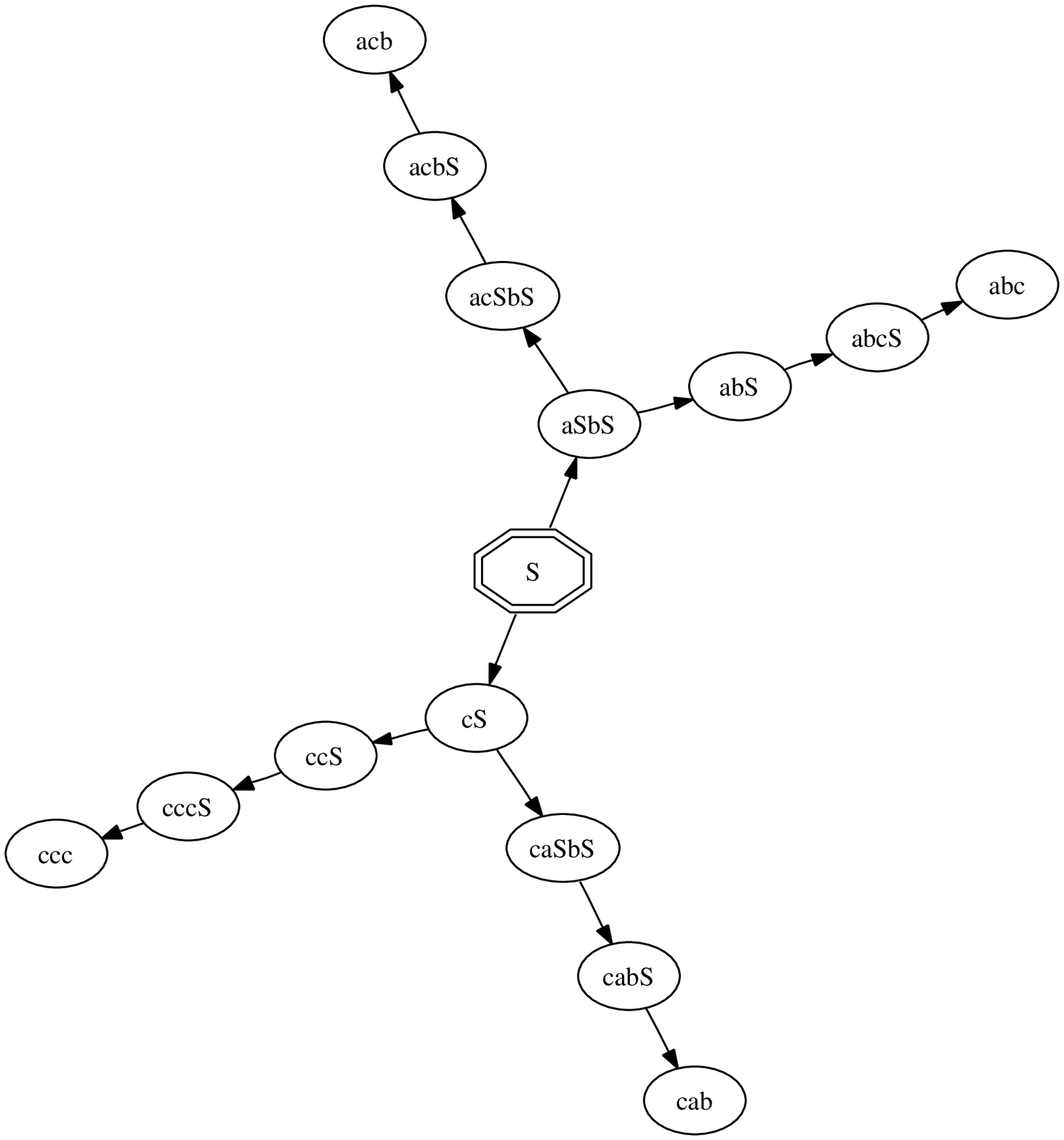}\\
					 $n=1$ & $n=2$ & $n=3$\\
					 \includegraphics[height=3cm]{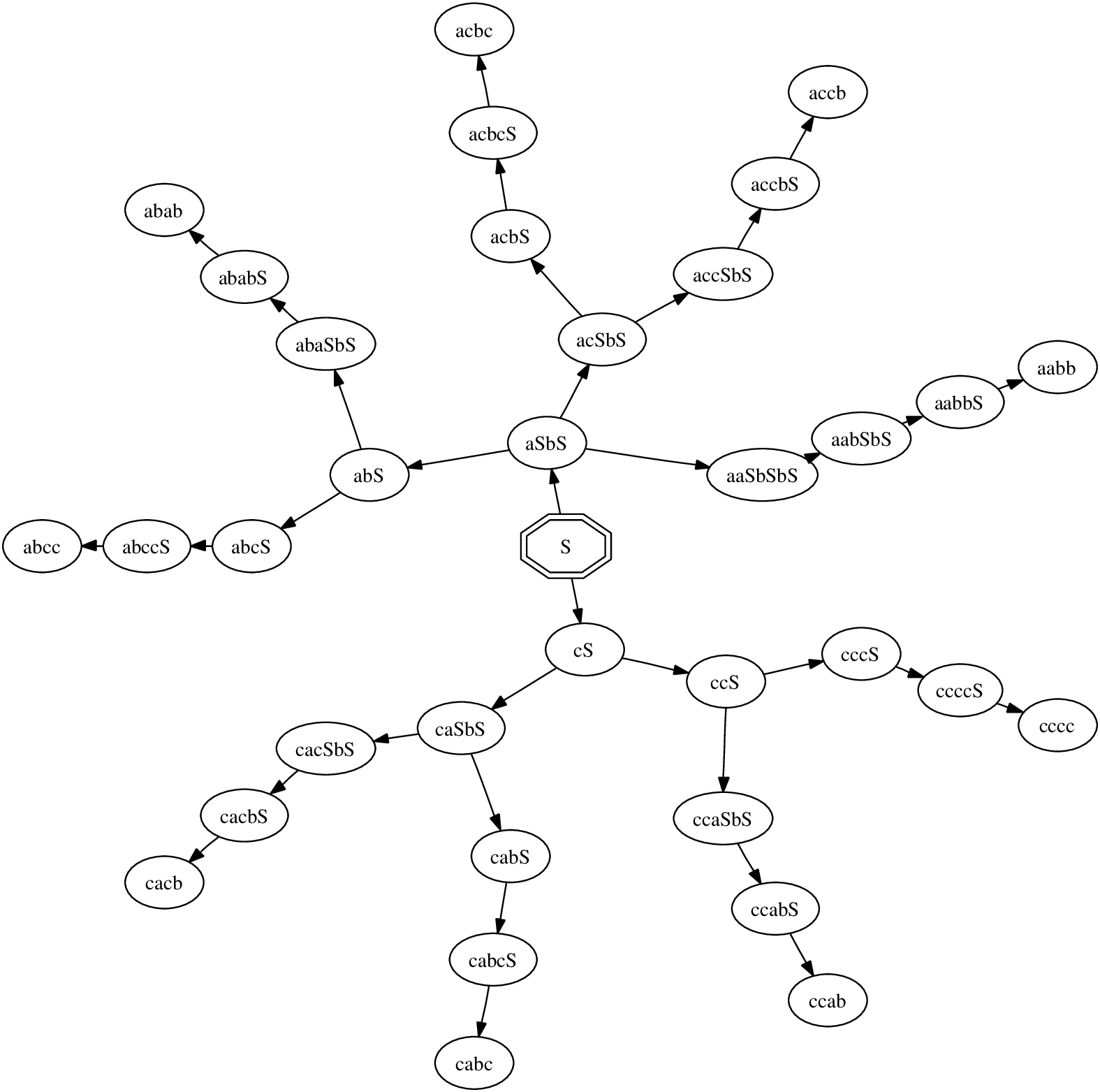} & \includegraphics[height=3cm]{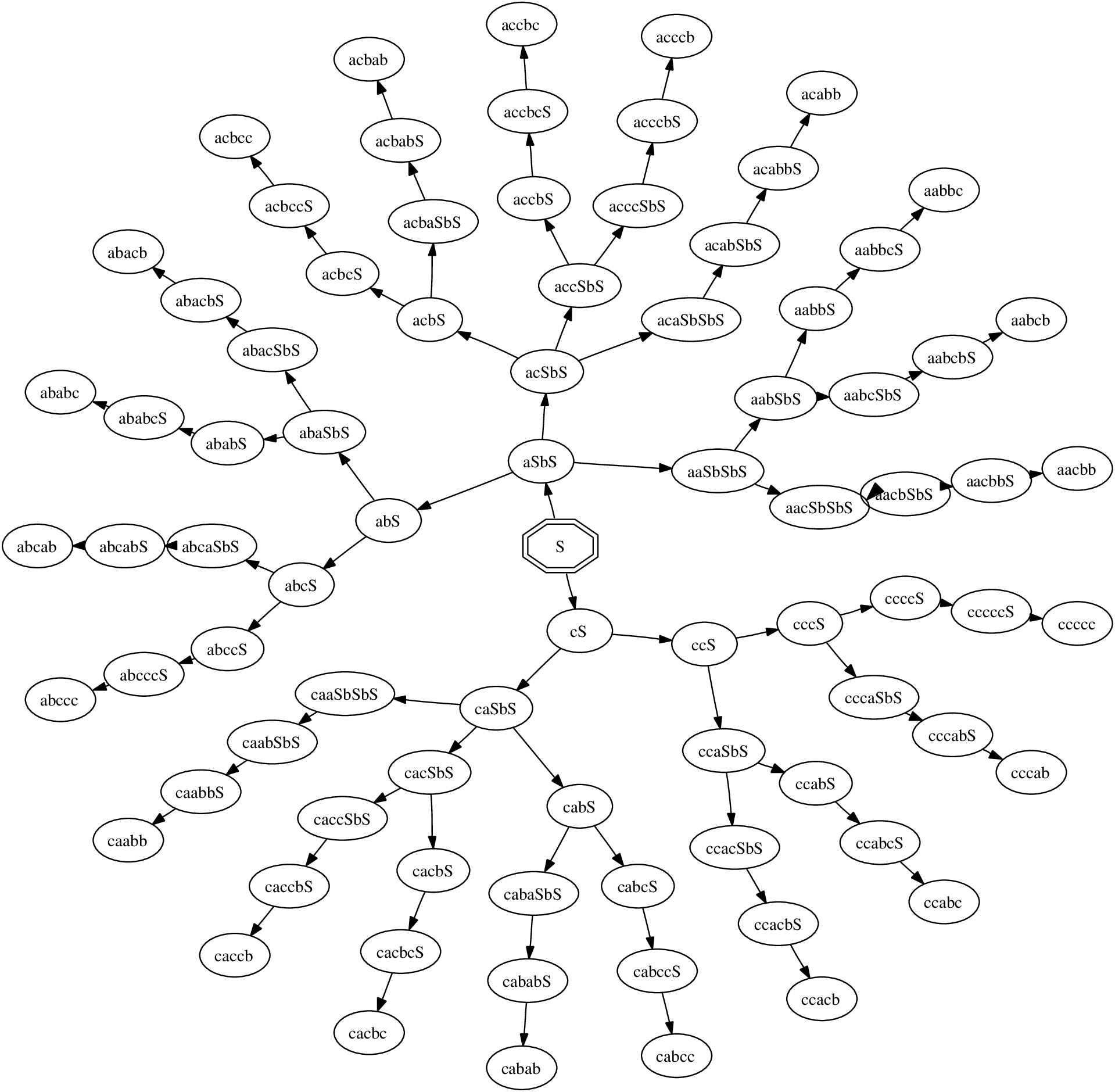} & \includegraphics[height=3cm]{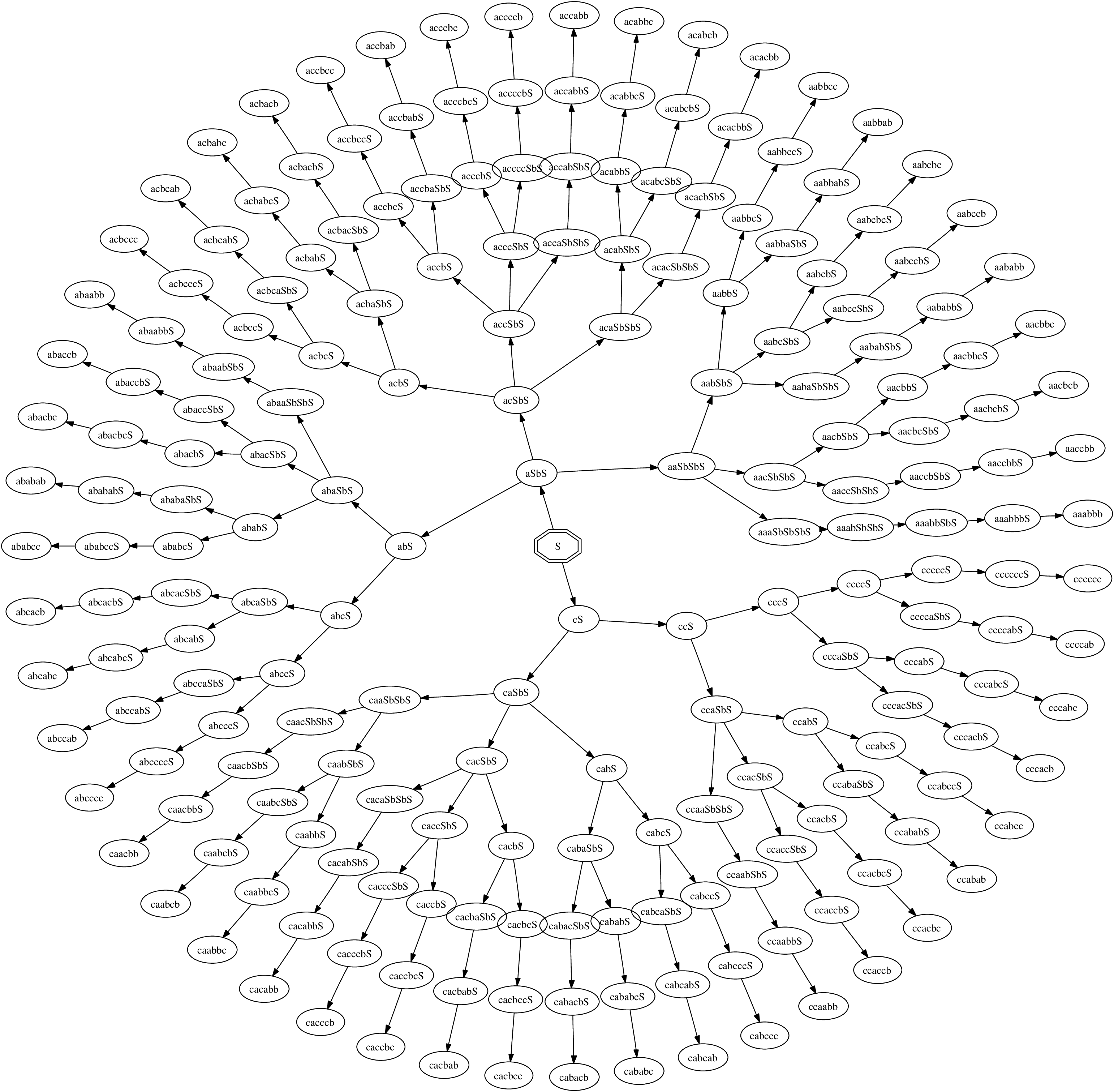}\\
					 $n=4$ & $n=5$ & $n=6$
					 \end{tabular}}
				\end{center}
				\caption{Trees of all walks associated with Motzkin words of size $n\in[1,6]$ generated by the grammar $S\Production \UnG{\UnG{a\;S\;b\;S}{c\; S}}{\varepsilon}$
				 under a \emph{leftmost first} derivation policy.}
				\label{fig:MotzkinWalks}
	\end{figure}
	\subsection{Random generation as a random walk in language space}
	An \Def{atomic derivation}, starting from a word $\omega=\PrG{\PrG{\omega'}{\Nt}}{\omega''}\in\{\Voc\cup\NtSet\}^*$,
	is the application of a production $\Nt\Production X$ to $\omega$ that replaces $\Nt$ by the right-hand
	side $X$ of the production, which yields $\omega\Derive \omega'.X.\omega''$.
	Let us call \Def{derivation policy} a deterministic strategy that points, in an immature word, the
	first non-terminal to be rewritten through an atomic derivation. Formally, in the context of a grammar $\Gram$,
	a derivation policy is a function $\DerPol:\Lang{\Gram}\cup\LangImm{\Gram}\to\mathbb{N}\cup\{\emptyset\}$ such
	that
	$$
		\begin{array}{rrcl}
			\DerPol : & \omega\in\Lang{\Gram}& \to & \emptyset\\
							 & \omega'\in\LangImm{\Gram}& \to & i\in [1,|\omega'|]
		\end{array}
	$$

	A sequence of atomic derivations is then said to be \Def{consistent with a given derivation policy} if
	the non-terminal rewritten at each step is the one pointed by the policy.
	A side effect of this somewhat verbose notion is that it provides a convenient
	framework for defining the \Def{unambiguity} of a grammar without any reference to parse trees.
	\begin{defn}
		Let $\Gram=(\Voc,\NtSet,\ProdRules,\Axiom)$ be a context-free grammar and $\DerPol$ a derivation policy acting on $\Gram$.
		The grammar $\Gram$ is said unambiguous if and only, for each $\omega\in\Lang{\Gram}$, there exists
		only one sequence of atomic derivations that is consistent with  $\DerPol$ and produces $\omega$ from $\Axiom$.
	\end{defn}

		The derivation leading to a mature word $\omega\in\Lang{\Gram}$ in a grammar $\Gram$ can then be associated
	in a one-to-one fashion with a walk in the space of languages associated with immature words, or \Def{parse walk},
	taking steps consistent with a given derivation policy $\phi$. More
	precisely, such walks starts from the axiom $\Axiom$ of the grammar.
	From a given immature word $X\in\LangImm{\Gram}$, the derivation policy $\phi$ points at a position
	$k\Eqdef\phi(X)$, where a non-terminal $X_k$ can be found. The parse walk can then be
	prolonged using one of the derivations acting on $X_k$ (See Figures~\ref{fig:MotzkinWalks}
	and~\ref{fig:FullExample}).

	\subsection{Weighted context-free grammars}
	\begin{defn}[Weighted Grammar~\cite{deniserandom}]
		A weighted grammar $\GramW$ is a 5-tuple $\GramW=(\Pond,\Voc,\NtSet,\ProdRules,\Axiom)$ where
		$\Voc$, $\NtSet$, $\ProdRules$ and $\Axiom$ are the members of a context-free grammar, and
		$\Pond:\Voc\to\mathbb{R}$ is a weighting function, that associates a real-valued weight $\Pond_\T$ to each
		terminal symbol $\T$.
	\end{defn}
	This notion of weight naturally extends to a mature word $w$ in a multiplicative fashion,
	i.e. such that $\Pond(w)=\prod_{i=1}^{|w|}\Pond_{w_i}$. From this, we can define a
	$\Pond$-\Def{weighted probability distribution} over a set of words  $\mathcal{L}$, such that
	the probability associated with $\omega\in\mathcal{L}$ is
		$$ \Prob{w\;|\;\Pond} = \frac{\displaystyle{\Pond(w)}}{\displaystyle{\sum_{w'\in\mathcal{L}}\Pond(w')}} $$
	In the rest of the document, we may use $\Pond(\omega)$ instead of $\Pond(\Lang{\omega,{\bf n}})$ to denote the total weight
	of all words derivable from an immature word $\omega$ with associated lengths ${\bf n}$.

\section{Efficiency of a rejection approach for the non-redundant generation}\label{sec:rejet}
	We address the uniform generation of a non-redundant set of words from a language
	$\mathcal{L}$ with \Def{forbidden words} $\Forb\subset\mathcal{L}$. A \Def{rejection
	approach} for this problem consists in drawing words at
	random in an unconstrained way, rejecting those previously sampled until $k$ distinct words
	are generated, as prescribed by Zimmermann~\cite{ZimPowerset95} in the case of recursive specifications.
	For the generation of words from context-free languages, we refer to
	previous works of Flajolet \emph{et al}~\cite{flajoletcalculus,fullboltz}, or
	Denise \emph{et al}~\cite{DeZi99} that achieves  a $\BigO{n^{1+\varepsilon}}$
	complexity through highly non-trivial floating-point arithmetics.
\subsection{The uniform case}
	\begin{thm}
		Let $\mathcal{L}$ be a context-free language, $n\in\mathbb{N}^+$ a positive integer and $\Forb\subset\mathcal{L}_n$ a
		set of forbidden words. Then the \Def{rejection approach} for the non-redundant uniform random generation of $k$
		words of size $n$ from $\mathcal{L}$ has average-case complexity in
		$\mathcal{O}\left(\left(\frac{|\mathcal{L}_n|}{|\mathcal{L}_n|-|\Forb|}\right)n^{1+\varepsilon}k\log{k}\right)$.
	\end{thm}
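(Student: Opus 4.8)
The plan is to decompose the total cost into two independent pieces: the number of words we must draw before obtaining $k$ distinct ones that avoid $\Forb$, and the per-word cost of unconstrained uniform generation. The latter is the $\BigO{n^{1+\eps}}$ bound of Denise \emph{et al}~\cite{DeZi99}, so the real work is bounding the expected number of draws. First I would set $p\Eqdef(|\mathcal{L}_n|-|\Forb|)/|\mathcal{L}_n|$, the probability that a single uniform draw lands in the \emph{allowed} set $\mathcal{L}_n\setminus\Forb$. Because generation is uniform, conditioning on landing in the allowed set yields a uniform draw \emph{among} the allowed words, so I can analyze the process in two layers: an outer coupon-collector process restricted to the $|\mathcal{L}_n|-|\Forb|$ admissible words, wrapped inside a geometric filter that discards the forbidden ones.

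The key computation is the expected number of draws. I would model the run as a sequence of trials, where after $j$ distinct admissible words have already been collected, each fresh draw is ``useful'' (produces a new admissible word) with probability $q_j\Eqdef(|\mathcal{L}_n|-|\Forb|-j)/|\mathcal{L}_n|$, since a draw succeeds only if it avoids $\Forb$ \emph{and} avoids the $j$ already-seen words. The waiting time in state $j$ is geometric with mean $1/q_j$, so the expected total number of draws to reach $k$ distinct allowed words is
\begin{equation}
\sum_{j=0}^{k-1}\frac{|\mathcal{L}_n|}{|\mathcal{L}_n|-|\Forb|-j}.
\end{equation}
I would then bound each summand by replacing the denominator's shift in $\Forb$ by the cleaner factor $|\mathcal{L}_n|/(|\mathcal{L}_n|-|\Forb|)$ pulled out front, leaving a harmonic-type sum $\sum_{j=0}^{k-1}1/(1-j/(|\mathcal{L}_n|-|\Forb|))$. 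Bounding this sum from above by a partial harmonic series gives the $\BigO{k\log k}$ coupon-collector factor, and multiplying by the $\BigO{n^{1+\eps}}$ per-draw cost produces the stated bound $\mathcal{O}\!\left(\frac{|\mathcal{L}_n|}{|\mathcal{L}_n|-|\Forb|}\,n^{1+\eps}k\log k\right)$.

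The main obstacle I anticipate is making the harmonic-sum estimate uniform and honest, rather than hand-waving the $\log k$. The subtlety is that the summand's denominator $|\mathcal{L}_n|-|\Forb|-j$ shrinks as $j$ grows toward $k-1$, so the crude factorization into a constant $|\mathcal{L}_n|/(|\mathcal{L}_n|-|\Forb|)$ times $H_k$ is only valid when $k$ is comfortably smaller than $|\mathcal{L}_n|-|\Forb|$; for $k$ close to the number of admissible words the last few coupons become expensive in a way the clean bound hides. I would therefore either restrict attention to the regime $k\le(|\mathcal{L}_n|-|\Forb|)/2$ (so that every denominator stays at least half of $|\mathcal{L}_n|-|\Forb|$, legitimizing the constant factor) or absorb the discrepancy into the $\log k$ term via a more careful $\sum 1/(m-j)\le \log m-\log(m-k)+1$ estimate with $m\Eqdef|\mathcal{L}_n|-|\Forb|$. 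A secondary point to handle is independence: I must confirm that the geometric waiting times across distinct coupon-collector states are genuinely independent so that expectations add, which follows because each draw is an independent uniform sample and the ``useful'' event depends only on the current set of collected words.
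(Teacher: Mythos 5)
Your proof follows the same skeleton as the paper's: an expected-number-of-draws (coupon-collector) analysis multiplied by the $\BigO{n^{1+\eps}}$ cost of each unconstrained generation. The interesting difference is in how $\Forb$ enters. The paper first treats $\Forb=\emptyset$, where the expected number of trials is $\sum_{i=0}^{k-1}\frac{l_n}{l_n-i}=l_n(\mathcal{H}_{l_n}-\mathcal{H}_{l_n-k})\le k\mathcal{H}_k$ with $l_n\Eqdef|\mathcal{L}_n|$, and then asserts that rejections caused by $\Forb$ are independent of the collection process and contribute the separate multiplicative factor $\frac{|\mathcal{L}_n|}{|\mathcal{L}_n|-|\Forb|}$. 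You instead compute the exact expectation of the combined process, $\sum_{j=0}^{k-1}\frac{L}{m-j}$ with $L\Eqdef|\mathcal{L}_n|$ and $m\Eqdef L-|\Forb|$. Yours is the more rigorous route: term by term, $\frac{L}{m-j}\ge\frac{L}{m}\cdot\frac{L}{L-j}$ (cross-multiplication reduces this to $mj\le Lj$), so the paper's two-factor product is in fact an \emph{under-estimate} of the true expectation --- once part of the pool is forbidden, duplicates are also re-drawn relatively more often, an interaction the independence claim glosses over. The paper's final bound survives this looseness, but your unified accounting is what actually justifies it.

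The one step where your plan is weaker than it needs to be is the final estimate of $\sum_{j=0}^{k-1}\frac{m}{m-j}$, and neither of your two proposed fixes suffices on its own: restricting to $k\le m/2$ proves a weaker statement than the theorem (which carries no such hypothesis), while the bound $\log m-\log(m-k)+1$, once multiplied by $m$, is at least $m$ and therefore overshoots $k\log k$ whenever $k$ is much smaller than $m/\log m$. You would have to combine the two regimes. A cleaner alternative, implicit in the paper's appeal to the coupon-collector bound $k\mathcal{H}_k$, is the term-wise inequality
$$\frac{m}{m-j}\le\frac{k}{k-j},\qquad 0\le j\le k-1,$$
valid whenever $k\le m$ (cross-multiply: it reduces to $kj\le mj$); note that $k\le m$ is forced anyway, since $k$ distinct admissible words must exist for the task to be well-posed. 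Summing gives $\sum_{j=0}^{k-1}\frac{m}{m-j}\le\sum_{j=0}^{k-1}\frac{k}{k-j}=k\mathcal{H}_k\in\Theta(k\log k)$ uniformly in $k$, with no case analysis, and your argument is then complete.
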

	\begin{proof}
		In the uniform model when $\Forb=\emptyset$, the number of attempts necessary to the generation of the
		$i$-th word only depends on $i$ and is independent from prior events. Thus the random variable $X_{n,k}$ that
		contains the total number of trials for the generation of $k$ words of size $n$ is such that
		$$ \Expect{X_{n,k}} = \sum_{i=0}^{k-1}\frac{l_n}{l_n-i} = l_n(\mathcal{H}_{l_n}-\mathcal{H}_{l_n-k}) $$
		where $l_{n}\Eqdef|\mathcal{L}_n|$ is the number of words of size $n$ in the language and $\mathcal{H}_i$ the harmonic number of order $i$, as
		pointed out by Flajolet \emph{et al}~\cite{FlaGarThi92}. It follows that $\Expect{X_{n,k}}$ is trivially
		increasing with $k$, while remaining upper bounded by $k\mathcal{H}_{k}\in\Theta(k\log(k))$ when $k=l_n$
		(Coupon collector problem). Since the expected number of rejections due to a non-empty forbidden set $\Forb$ remains the same
		throughout the generation, and does not have any influence over the generated sequences, it can be
		considered independently and contributes to a factor $\frac{|\mathcal{L}_n|}{|\mathcal{L}_n|-|\Forb|}$.
		Finally, each generation takes time $\BigO{n^{1+\varepsilon}}$, independently from both the generated sequence
		and the cardinality of $\mathcal{L}_n$.
	\end{proof}
	The complexity of a rejection approach to this problem is then mainly linear, unless the set
	$\Forb$ \emph{overwhelms} the language generated by the grammar. In this case, the generation can
	become linear in the cardinality of $\mathcal{L}_n$, that is exponential in $n$ for most languages.
	Furthermore, the worst-case time complexity of this approach remains unbounded.

		\subsection{Weighted context-free languages}
	By contrast with the uniform case, the rejection approach to the non-redundant random
	generation for {\bf weighted context-free languages} can yield an {\bf exponential complexity}, even when starting from an
	empty set of forbidden words $\Forb=\emptyset$. Indeed, the weighted
	distribution can specialize into a power-law distribution on the number of occurrences of
	the terminal symbol having highest weight, leading to an exponentially-growing number of
	rejections.

	{\bf\noindent Example: }Consider the following grammar, generating the language $a^*b^*$ of words starting with
	an arbitrary number of $a$, followed by any number of $b$:
	\begin{eqnarray*}
		S & \Production & \PrG{a}{S}\;|\; T\\
		T & \Production & \PrG{b}{T}\;|\; \varepsilon
	\end{eqnarray*}
	We adjoin a weight function $\Pond$ to this grammar, such that $\Pond(b)\Eqdef\alpha>1$ and $\Pond(a)\Eqdef 1$.
	The probability of the word $\omega_m\Eqdef a^{n-m}b^m$ among $\mathcal{S}_n$ is then
	$$\Prob{\omega_m}
		=\frac{\Pond(\omega_m)}{{\sum_{\substack{\omega\in\Lang{S}\\|\omega|=n}} \Pond(\omega)}}
		=\frac{\alpha^m}{\sum_{i=0}^{n}\alpha^i}
		=\frac{\alpha^{m+1}-\alpha^{m}}{\alpha^{n+1}-1}
		< \alpha^{m-n}.$$
	Now consider the set $\mathcal{V}_{n,k}\subset\mathcal{S}_n$ of words having less
	than $n-k$ occurrences of the symbol $b$. The probability of generating a word from
	$\mathcal{V}_{n,k}$ is then
	$$
		\Prob{\mathcal{V}_{n,k}}
		=\sum_{i= 0}^{n-k}\Prob{\omega_{n-k-i}}
		=\frac{\alpha^{n-k+1}-1}{\alpha^{n+1}-1}< \alpha^{-k}
	$$
	The expected number of generations before a sequence from $\mathcal{V}_{n,k}$ is generated
	is then lower-bounded by $\alpha^k$. Since any non-redundant set of $k$ sequences issued from $\mathcal{S}_n$
	must contain at least one sequence from $\mathcal{V}_{n,k}$, then the average-case time complexity
	of the rejection approach is in $\Omega(n\alpha^k)$, that is exponential in $k$ the number of words.
	\\

	One may argue that the previous example is not very typical of the rejection algorithm's
	behavior on a general context-free language, since the grammar is left linear and consequently
	defines a rational language.
	By contrast, it can be shown that, under a natural assumption, no word
	can asymptotically contribute up to a significant proportion of the distribution in simple type grammars.
	\begin{thm}
		Let $\GramW=(\Pond,\Voc,\NtSet,\Axiom,\ProdRules)$ be a weighted grammar of simple type\footnote{A grammar of
		simple type is mainly a grammar whose dependency graph is strongly-connected and whose number of words follow
		an aperiodic progression (See~\cite{FlaFusPiv07} for a more complete definition). Such a grammar can easily be
		found for the avatars of the algebraic class of combinatorial structures (Dyck words, Motzkin paths, trees of
		fixed degree,...), all of which can be interpreted as trees.}. Let $\omega^0_n$ be the word of length $n$
		generated from $\GramW$ with highest probability (i.e. weight) in the $\Pond$-weighted distribution.
		Additionally, let us assume that there exists $\alpha,\kappa\in\mathbb{R}^+$ positive constants such that
		$\Pond(\omega^0_n) \underset{n\to\infty}{\longrightarrow}\kappa\alpha^n$.\\
		Then the probability of $\omega^0$ tends to $0$ when $n\to\infty$:
		$$ \Prob{\omega^0\;|\;\Pond}=\frac{\Pond(\omega^0)}{\Pond(\Lang{\GramW}_n)} \underset{n\to\infty}{\longrightarrow} 0 $$
	\end{thm}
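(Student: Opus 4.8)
The plan is to compare the exponential growth rate of the single heaviest word with that of the total weight of all words of length $n$, and to show the latter strictly dominates. Write $F(z)=\sum_{n\ge 0}\Pond(\Lang{\GramW}_n)\,z^n$ for the weighted generating function of the language, each terminal $t$ contributing a factor $\Pond_t\,z$ in the usual way. Since $\GramW$ is unambiguous, $F$ is one component of the solution of the positive polynomial system $\mathbf{F}=\Phi(z,\mathbf{F})$ attached to the productions of $\GramW$, and I assume the weights are positive so that the coefficients $\Pond(\Lang{\GramW}_n)$ are nonnegative.

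The first step is to exploit the simple-type hypothesis. Strong connectivity of the dependency graph makes this system irreducible, and aperiodicity excludes the periodic regime; this is precisely the setting of the Drmota--Lalley--Woods theorem. It yields a unique dominant singularity $\rho>0$ of square-root type, so that singularity analysis gives
$$ \Pond(\Lang{\GramW}_n)\ \underset{n\to\infty}{\sim}\ c\,\beta^{\,n}\,n^{-3/2}, \qquad \beta\Eqdef 1/\rho, $$
for some constant $c>0$. Thus the total weight grows exponentially at rate $\beta$, modulated by the subexponential factor $n^{-3/2}$ characteristic of algebraic (tree-like) structures.

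Next I would locate the growth rate $\alpha$ of the heaviest word relative to $\beta$. Trivially $\Pond(\omega^0_n)\le \Pond(\Lang{\GramW}_n)$, the maximum of a family of nonnegative reals being bounded by their sum. Combining this with the hypothesis $\Pond(\omega^0_n)\sim\kappa\alpha^n$ and the asymptotics above gives, for all large $n$,
$$ \kappa\,\alpha^n\,(1+o(1))\ \le\ c\,\beta^{\,n}\,n^{-3/2}\,(1+o(1)). $$
Taking $n$-th roots forces $\alpha\le\beta$; moreover the boundary case $\alpha=\beta$ is impossible, since dividing by $\alpha^n=\beta^n$ would give $\kappa\,(1+o(1))\le c\,n^{-3/2}(1+o(1))\to 0$, contradicting $\kappa>0$. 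Hence $\alpha<\beta$ strictly. The conceptual content of the simple-type assumption is exactly this: no single word can carry the full exponential rate shared among the exponentially many words of length $n$.

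Finally I would form the ratio and conclude:
$$ \Prob{\omega^0\;|\;\Pond}=\frac{\Pond(\omega^0_n)}{\Pond(\Lang{\GramW}_n)}\ \underset{n\to\infty}{\sim}\ \frac{\kappa}{c}\left(\frac{\alpha}{\beta}\right)^{n} n^{3/2}, $$
and because $\alpha/\beta<1$ the geometric decay overwhelms the polynomial factor $n^{3/2}$, so the probability tends to $0$. I expect the genuine obstacle to be the first step: establishing the square-root-singularity asymptotics for the \emph{weighted} total weight, which is where the full strength of the simple-type assumption (irreducibility together with aperiodicity, through Drmota--Lalley--Woods) is required. Once that polynomial correction is secured, the strict separation $\alpha<\beta$ and the conclusion follow with essentially no further work.
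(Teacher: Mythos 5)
Your proposal is correct and follows essentially the same route as the paper's own proof: invoke Drmota--Lalley--Woods for the (weighted) simple-type grammar to get $\Pond(\Lang{\GramW}_n)\sim c\,\beta^n n^{-3/2}$, use $\Pond(\omega^0_n)\le\Pond(\Lang{\GramW}_n)$ together with the $n^{-3/2}$ factor to force the strict inequality $\alpha<\beta$, and conclude that the probability ratio decays geometrically. In fact your write-up is somewhat more explicit than the paper's, which compresses the strict-inequality step and the final limit into a single sentence.
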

	\begin{proof}
		From the Drmota-Lalley-Woods theorem~\cite{Drmota97,Lalley93,Woods93}, we know that the generating
		function of a simple type grammar has a \emph{square-root type} singularity, and its coefficients
		admits an expansion of the form $\frac{\kappa\beta^{n}}{n\sqrt{n}}(1+\mathcal{O}(1/n))$.
		This property holds in the case of weighted context-free grammars, where the coefficients are now the
		overall weights $W_n\Eqdef \Pond(\Lang{\GramW}_n)$. Since $\omega^0_n$ is contributing to $W_n$, then
		$\Pond(\omega^0_n)\le\Pond(\Lang{\GramW}_n)$ and therefore $\beta>\alpha$.
	\end{proof}
	Lastly it can be shown that, for any fixed length $n$, the set of words $\mathcal{M}$
	having maximal number of occurrences of a given symbol $\T$  have total probability $1$
	when $\Pond(\T)\to\infty$. It follows that sampling more than $|\mathcal{M}|$ words can
	be extremely time-consuming if one of the weights dominates the others by several orders
	of magnitude.

\section{Step-by-step approach to the non-redundant random generation}\label{sec:unifForb}
		\begin{figure*}[t]
				\begin{center}
					 \fbox{\begin{tabular}{cc}
							 \multicolumn{2}{c}{\scalebox{0.36}{\input{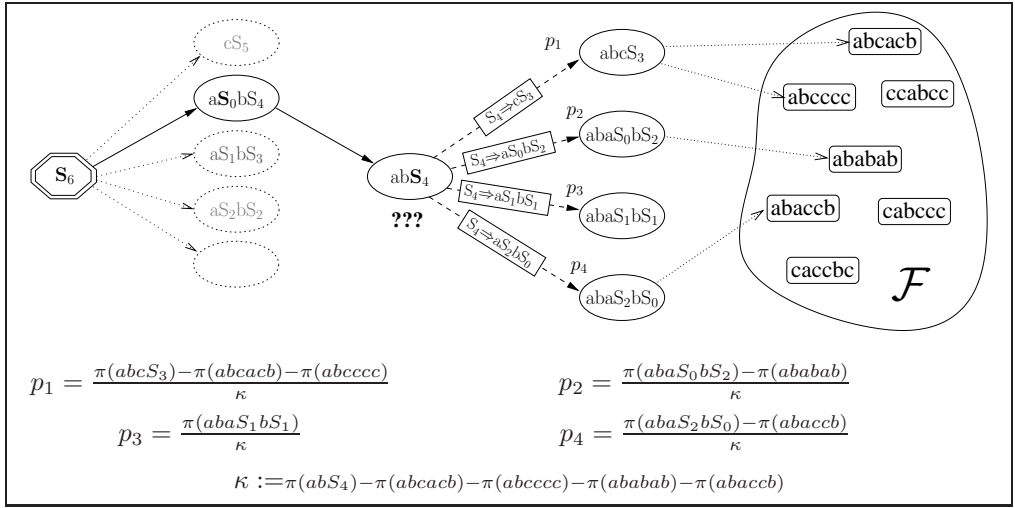}}}\\[0.3cm]
							 $p_1=\frac{\Pond(abcS_3)-\Pond(abcacb)-\Pond(abcccc)}{\kappa}$ & $p_2=\frac{\Pond(abaS_0bS_2)-\Pond(ababab)}{\kappa}$  \\[0.2cm]
							 $p_3=\frac{\Pond(abaS_1bS_1)}{\kappa}$ & $p_4=\frac{\Pond(abaS_2bS_0)-\Pond(abaccb)}{\kappa}$\\[0.2cm]
							 \multicolumn{2}{c}{$\kappa \Eqdef \scriptstyle{\Pond(abS_4)-\Pond(abcacb)-\Pond(abcccc)-\Pond(ababab)-\Pond(abaccb)}$}
						 \end{tabular}}
				\end{center}
				\caption{Snapshot of a \emph{step-by-step} random scenario for a Motzkin word of
				length $6$, generated while avoiding $\Forb$. From $abS_4$, the recursive approach prescribes
				that the probabilities  associated with candidate derivations must be proportional to the overall weights for
				resulting immature words. Additionally, we need to subtract the contributions of $\Forb$ to these immature words,
				which yields probabilities $(p_1,p_2,p_3,p_4)$.}
				\label{fig:FullExample}
	\end{figure*}

	Instead of approaching the random generation from context-free languages by considering non-terminal symbols
	as \Def{independent generators}~\cite{flajoletcalculus,deniserandom}, we consider the random generation
	scenarios as random (parse) walks. This allows to determine to which of the local alternatives
	to subtract the contributions of forbidden (or already generated) words.
	\subsection{The algorithm}
	We propose the following Algorithm $\AlgW$, which from
	\begin{itemize}
		\item a weighted grammar $\GramW=(\Pond,\Voc,\NtSet,\ProdRules,\Axiom)$,
		\item an immature word $(\omega,{\bf n})$,
		\item and a set of forbidden words $\Forb\in\Lang{(\omega,{\bf n})}$,
	\end{itemize}
	draws at random a word $\Lang{\omega,{\bf n}}/\Forb$ with respect to a $\Pond$-weighted
	distribution:
	\begin{enumerate}
		\item \label{alg:AlgoCBegin} If $\omega$ is a mature word \\
						then if $\VectSizes=(1,\ldots,1)$ and $\omega\notin\Forb$ then return $\omega$ else {\tt Error}.
		\item
		Let $\Nt^*_m$ be the non-terminal pointed by $\DerPol$
		in $\omega$, such as $\omega=\omega'.\Nt^*_m.\omega''$
		and $k^\Pond$ be the total weight of words from $\Forb$ generated from $\omega$.
		\item \label{alg:AlgoCRewrite}
		Choose a derivation $\omega\Derive \omega' X\omega''$ with the following probabilities,
		depending on the type of $\Nt^*$:
		\begin{itemize}
			\item $\Nt^* \Production \UnG{\Nt'}{\Nt''}$: Let $k^\Pond_{\Nt'}$ be the sum of weights for words from $\Forb$
			generated from $\omega'.\Nt'_m.\omega''$, then
				$$\Prob{X = \Nt'_m} = \displaystyle{\frac{\Pond(\Lang{\omega'.\Nt'_m.\omega''})-k^{\Pond}_{\Nt'}}{\Pond(\Lang{\omega})-k^{\Pond}}} = 1-\Prob{X =
				\Nt''_m}$$
			\item $\Nt^* \Production \PrG{\Nt'}{\Nt''}$: Let $k^\Pond_{i}$ be the sum of weights for words from $\Forb$
			generated from $\omega'.\Nt'_i.\Nt''_{m-i}.\omega''$.
			$$ \Prob{X = \Nt'_i.\Nt''_{m-i}} =
				 \displaystyle{\frac{\Pond(\Lang{\omega'.\Nt'_i.\Nt''_{m-i}.\omega''})-k^{\Pond}_i}{\Pond(\Lang{\omega})-k^{\Pond}}},\;i\in[1,m-1] $$
			\item $\Nt^* \Production \T$: If $m=1$ then $ \Prob{X = \T}=1 $  else {\tt Error}.
		\end{itemize}
		\item Iterate from step \ref{alg:AlgoCBegin}.
	\end{enumerate}
	\begin{prop} \label{thm:lastTheorem}
		Let $\GramW=(\Pond,\Voc,\NtSet,\Axiom,\ProdRules)$ be a weighted grammar, $\omega\in\{\Voc\cup\NtSet\}^*$ be an immature word
		and $\VectSizes\in\mathbb{N}^{|\omega|}$ be a vector of sizes associated with positions of $\omega$.\\
		Then Algorithm $\AlgW$ draws a mature word at random according to the $\Pond$-weighted distribution
		from $\Lang{\omega,\VectSizes}\backslash\Forb$ or throws an Error if
		$\Lang{\omega,\VectSizes}=\emptyset$.
	\end{prop}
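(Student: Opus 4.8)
The plan is to prove correctness by induction on the number of atomic derivations remaining in the parse walk, equivalently on the total length $\sum_i \VectSizes_i$ of words yet to be generated from the immature word $(\omega,\VectSizes)$. The key invariant I would maintain is that, at every step, Algorithm $\AlgW$ assigns to each mature word $w \in \Lang{\omega,\VectSizes}\backslash\Forb$ a probability exactly equal to $\Pond(w)/(\Pond(\Lang{\omega,\VectSizes}) - k^\Pond)$, where $k^\Pond$ is the total weight of forbidden words reachable from $(\omega,\VectSizes)$. Since $\Pond(\Lang{\omega,\VectSizes}) - k^\Pond = \Pond(\Lang{\omega,\VectSizes}\backslash\Forb)$, this is precisely the target $\Pond$-weighted distribution on the non-forbidden words.

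First I would dispatch the base case. When $\omega$ is already mature, the walk terminates: if $\VectSizes = (1,\ldots,1)$ and $\omega \notin \Forb$, the single word $\omega$ is returned with probability $1$, which agrees with the target distribution on the singleton language; the {\tt Error} cases correspond exactly to $\Lang{\omega,\VectSizes} = \emptyset$ or to $w \in \Forb$ having been excluded, so nothing is wrongly returned. For the inductive step I would locate the non-terminal $\Nt^*_m$ pointed at by $\DerPol$ and split on its rule type. In each case the derivation partitions $\Lang{\omega,\VectSizes}\backslash\Forb$ into disjoint blocks indexed by the candidate successors $\omega' X \omega''$: for the union rule the two blocks correspond to $\Nt'_m$ and $\Nt''_m$, and for the product rule the blocks correspond to the $m-1$ size-splits $\Nt'_i.\Nt''_{m-i}$. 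Disjointness and exhaustiveness here rely on the unambiguity of the grammar together with the determinism of $\DerPol$, which guarantee each target word lies in exactly one block.

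The core computation is then to verify that the probability of reaching any fixed non-forbidden word $w$ telescopes correctly. By the law of total probability, $\Prob{w}$ equals the probability of choosing the block containing $w$ times the conditional probability assigned to $w$ within that block by the recursive call. The recursive call is applied to an immature word with strictly smaller remaining length, so by the induction hypothesis it draws $w$ with conditional probability $\Pond(w)/(\Pond(\Lang{\omega'X\omega''}) - k^\Pond_{\text{block}})$. Multiplying by the block-selection probability from step~\ref{alg:AlgoCRewrite}, namely $(\Pond(\Lang{\omega'X\omega''}) - k^\Pond_{\text{block}})/(\Pond(\Lang{\omega}) - k^\Pond)$, the block-weight factors cancel and leave exactly $\Pond(w)/(\Pond(\Lang{\omega}) - k^\Pond)$, as required. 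I would also check that the assigned probabilities sum to $1$ using the multiplicativity $\Pond(\Lang{\omega,\VectSizes}) = \prod_i \Pond(\Lang{\omega_i,\VectSizes_i})$ and the additivity of weight over the disjoint blocks: $\sum_{\text{blocks}} (\Pond(\Lang{\text{block}}) - k^\Pond_{\text{block}}) = \Pond(\Lang{\omega}) - k^\Pond$.

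The main obstacle I expect is the careful bookkeeping of the forbidden-weight quantities $k^\Pond$, $k^\Pond_{\Nt'}$ and $k^\Pond_i$ across the recursion, and in particular establishing the additivity identity $\sum_{\text{blocks}} k^\Pond_{\text{block}} = k^\Pond$ for the product case. This again reduces to unambiguity: each forbidden word reachable from $(\omega,\VectSizes)$ determines a unique size-split of $\Nt^*_m$, so $\Forb$ restricted to $\Lang{\omega}$ partitions cleanly over the $m-1$ product blocks, and the weights add up without overcounting. Once this additivity is in place the denominators match the numerators summed over blocks, the probabilities are genuinely normalized, and the telescoping argument closes the induction.
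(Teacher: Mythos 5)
Your proof takes essentially the same route as the paper's: an induction on the number of atomic derivations remaining, with the same telescoping cancellation (block-selection probability times the conditional probability from the recursive call), and your extra bookkeeping --- disjointness of blocks via unambiguity, normalization, and the additivity $\sum_{\text{blocks}} k^\Pond_{\text{block}} = k^\Pond$ --- only makes explicit what the paper states briefly (it notes $k^\Pond_{\Nt''} = k^\Pond - k^\Pond_{\Nt'}$ in the union case). One caution: your parenthetical claim that this induction is ``equivalently on the total length $\sum_i \VectSizes_i$'' is not correct, since a union derivation $\Nt^*_m \Derive \Nt'_m$ leaves the length vector unchanged, so the later assertion that the recursive call sees a ``strictly smaller remaining length'' fails in that case; the induction must be carried on the number of remaining derivation steps, as the paper does, after which your argument goes through unchanged.
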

	\begin{proof}
		The previous claim can be proved very quickly by induction on the number $k$ of required executions of line
		\ref{alg:AlgoCBegin} before a mature word is issued:

		{\bf\noindent Base:} The $k=0$ case corresponds to an already mature word $\omega$, for which the associated language is limited
		to $\{\omega\}$. As $\omega$ is generated by line \ref{alg:AlgoCBegin} iff $\omega\notin\Forb$ then the claimed
		result holds in that case.

		{\bf\noindent Inductive step:} Assuming that the theorem holds for $k\ge n$, we investigate the probabilities of emission for
		words that require $k=n+1$ derivations. Let $\Nt^*_m$ be the non-terminal pointed by $\DerPol$, then:
		\begin{itemize}
			\item $\Nt^* \Production \UnG{\Nt'}{\Nt''}$:
				Assume that the derivation $\Nt^*_m\Derive \Nt'_m$ is chosen w.p.
				$\frac{\Pond(\Lang{\omega'.\Nt'_m.\omega''})-k^\Pond_{\Nt'}}{\Pond(\Lang{\omega})-k^\Pond}$.
				Then the induction hypothesis applies and a word $x$ is generated from $\Lang{\omega'.\Nt'_m.\omega''}\backslash\Forb$
				in the $\Pond$-weighted distribution. In this distribution applied to $\Lang{\omega'.\Nt'_m.\omega''}\backslash\Forb$, the probability of $x$ is then
				\begin{eqnarray*}
					\Prob{x\;|\;\omega'.\Nt'_m.\omega''} &=& \frac{\Pond(x)}{\Pond(\Lang{\omega'.\Nt'_m.\omega''}\backslash\Forb)}\\
					&=&\frac{\Pond(x)}{\Pond(\Lang{\omega'.\Nt'_m.\omega''})-\Pond(\Lang{\omega'.\Nt'_m.\omega''}\cap\Forb)}\\
					&=&\frac{\Pond(x)}{\Pond(\Lang{\omega'.\Nt'_m.\omega''})-k^\Pond_{\Nt'}}
				\end{eqnarray*}
				The overall probability of $x$ starting from $\Nt^*_m$ is then
				\begin{eqnarray*}
				 \Prob{x} & = &
				 \frac{\Pond(\Lang{\omega'.\Nt'_m.\omega''})-k^\Pond_{\Nt'}}{\Pond(\Lang{\omega})-k^\Pond}\cdot\frac{\Pond(x)}{\Pond(\Lang{\omega'.\Nt'_m.\omega''})-k^\Pond_{\Nt'}}\\
					& = & \frac{\Pond(x)}{\Pond(\Lang{\omega})-k^\Pond} = \frac{\Pond(x)}{\Pond(\Lang{\omega}\backslash \Forb)}
				\end{eqnarray*}
				This property also holds if $\Nt''_m$ is chosen, after pointing out that
				$$\left(k^\Pond_{\Nt''}=k^\Pond - k^\Pond_{\Nt'}\right)\Rightarrow
				\left(1-\Prob{X = \Nt'_m}=\frac{\Pond(\Lang{\omega'.\Nt''_m.\omega''})-k^\pi_{\Nt'}}{\pi(\Lang{\omega})-k^{\pi}}\right)$$
			\item $\Nt^* \Production \PrG{\Nt'}{\Nt''}$:
				For any $i\in[1,m-1]$, a partition $\Nt^*_m \Derive \PrG{\Nt'_i}{\Nt''_{m-i}}$ is chosen w.p.
				$\frac{\Pond(\Lang{\omega'.\Nt'_i.\Nt''_{m-i}.\omega''})-k^\Pond_{i}}{\Pond(\Lang{\omega})-k^\Pond}$.
				Then the induction hypothesis applies and a word $x$ is generated from $\Lang{\omega'.\Nt'_i.\Nt''_{m-i}\omega''}\backslash\Forb$
				in the $\Pond$-weighted distribution. In this distribution applied to $\Lang{\omega'.\Nt'_i.\Nt''_{m-i}.\omega''}\backslash\Forb$,
				the probability of $x$ is then
				\begin{eqnarray*}
					\Prob{x\;|\;\omega'.\Nt'_i.\Nt''_{m-i}.\omega''} &=& \frac{\Pond(x)}{\Pond(\Lang{\omega'.\Nt'_i.\Nt''_{m-i}.\omega''}\backslash\Forb)}\\
					 &=&\frac{\Pond(x)}{\Pond(\Lang{\omega'.\Nt'_i.\Nt''_{m-i}.\omega''})-\Pond(\Lang{\omega'.\Nt'_i.\Nt''_{m-i}.\omega''}\cap\Forb)}\\
					&=&\frac{\Pond(x)}{\Pond(\Lang{\omega'.\Nt'_i.\Nt''_{m-i}.\omega''})-k^\Pond_{i}}
				\end{eqnarray*}
				The overall probability of $x$ starting from $\Nt^*_m$ is then
				\begin{eqnarray*}
				 \Prob{x} & = &
				 \frac{\Pond(\Lang{\omega'.\Nt'_i.\Nt''_{m-i}.\omega''})-k^\Pond_{i}}{\Pond(\Lang{\omega})-k^\Pond}\cdot\frac{\Pond(x)}{\Pond(\Lang{\omega'.\Nt'_i.\Nt''_{m-i}.\omega''})-k^\Pond_{i}}\\
					& = & \frac{\Pond(x)}{\Pond(\Lang{\omega})-k^\Pond} = \frac{\Pond(x)}{\Pond(\Lang{\omega}\backslash \Forb)}
				\end{eqnarray*}
			\item $\Nt^* \Production \T$: The probability of any word $x$ issued from $\omega$ is that of the word
			issued from $\omega'.\T.\omega''$, that is $\frac{\Pond(x)}{\Pond(\Lang{\omega'.\T.\omega''}\backslash\Forb)}=\frac{\Pond(x)}{\Pond(\Lang{\omega}\backslash\Forb)}$ by the
			induction hypothesis.
		\end{itemize}
	\end{proof}
	This algorithm then performs random generation of $k$ distinct words from a (weighted) context-free language
	by setting the initial immature word to $\Axiom_n$ the axiom of $\GramW$, adding the freshly generated sequence to $\Forb$
	at each step.

\section{Complexities and data structures}
	The algorithm's complexity depends critically on efficient strategies and data structures for:
	\begin{enumerate}
		\item The weights of languages associated with immature words.
		\item The contributions $k_\Pond$ of forbidden (or already generated) words
		associated with each immature word $(\omega,{\bf n})$
		\item The investigation of the different \emph{partitions} $\Nt^*_m \Derive \PrG{\Nt'_i}{\Nt''_{m-i}}$ in the case of \emph{product rules}.\label{alg:Boustrophedon}
		\item Big numbers arithmetics.\label{alg:Floats}
	\end{enumerate}
	\subsection{Weights of immature languages}
		\begin{prop}
		Let $(\omega,{\bf n})$ be an immature word and its associated length vector, whose weight
		$\Pond(\omega,{\bf n})$ is known.\\
		Then a pre-computation involving $\Theta(n)$ arithmetic operations makes it possible to compute in $\mathcal{O}(1)$ arithmetic operations
		the weight of any  word $(\omega',{\bf n}')$ atomically derived from $(\omega,{\bf n})$ .
		\end{prop}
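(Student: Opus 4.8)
The plan is to reduce everything to the multiplicativity of the weight function together with a table of precomputed elementary language weights, and to observe that an atomic derivation perturbs that product in only one place. First I would fix notation: let $k\Eqdef\DerPol(\omega)$ be the position pointed by the derivation policy, write $\omega=\omega'.\Nt^*_m.\omega''$, and use the defining identity $\Pond(\Lang{\omega,\VectSizes})=\prod_{i}\Pond(\Lang{\omega_i,\VectSizes_i})$. Setting $C\Eqdef\prod_{i\neq k}\Pond(\Lang{\omega_i,\VectSizes_i})$ for the weight contributed by every position other than the one being rewritten, this factors as $\Pond(\omega,\VectSizes)=C\cdot\Pond(\Nt^*_m)$. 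The key structural remark is that an atomic derivation modifies only the factor at position $k$, leaving all other symbols and their attached lengths untouched; hence the very same constant $C$ is shared by $(\omega,\VectSizes)$ and by every word $(\omega',\VectSizes')$ derived from it. Since $\Pond(\omega,\VectSizes)$ is known by hypothesis, I recover $C=\Pond(\omega,\VectSizes)/\Pond(\Nt^*_m)$ by a single division once $\Pond(\Nt^*_m)$ is available.

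Next I would precompute the \emph{elementary weights} $\Pond(\Nt_j)$, i.e.\ the total weight of all words of length $j$ derivable from each non-terminal $\Nt\in\NtSet$, for every length $j\in[0,n]$. In CNF these obey one recurrence per rule type: $\Pond(\Nt_j)=\Pond(\Nt'_j)+\Pond(\Nt''_j)$ for a union rule $\Nt\Production\UnG{\Nt'}{\Nt''}$, the convolution $\Pond(\Nt_j)=\sum_{i}\Pond(\Nt'_i)\,\Pond(\Nt''_{j-i})$ for a product rule $\Nt\Production\PrG{\Nt'}{\Nt''}$, and the base case $\Pond(\Nt_1)=\Pond_\T$ with $\Pond(\Nt_j)=0$ otherwise for a terminal rule. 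As the grammar is fixed, gathering the relevant elementary weights costs $\Theta(n)$. I would then read off each candidate derivative weight by substituting table entries into $C$:
$$\Pond(\omega',\VectSizes')=C\cdot\Pond(\Nt'_i)\,\Pond(\Nt''_{m-i}),\qquad C\cdot\Pond(\Nt'_m)\ \ (\text{or }C\cdot\Pond(\Nt''_m)),\qquad C\cdot\Pond_\T$$
for the product, union and terminal cases respectively. Each such value needs a constant number of look-ups and multiplications, hence $\BigO{1}$ arithmetic operations, which is exactly the claimed bound.

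The point requiring the most care is the justification of the multiplicative factorization itself, on which the whole argument rests: it is valid precisely because $\GramW$ is unambiguous, so that $\Lang{\omega,\VectSizes}=\prod_i\Lang{\omega_i,\VectSizes_i}$ holds with no word counted twice and the shared factor $C$ is genuinely well defined. I would therefore make the unambiguity hypothesis explicit at the start. A secondary subtlety, which I would flag rather than resolve here, is the product case: a single pointed non-terminal admits $\Theta(m)$ distinct partitions $\Nt^*_m\Derive\PrG{\Nt'_i}{\Nt''_{m-i}}$, and building the convolution row $\Pond(\Nt_j)$ is the only genuinely costly ingredient. The present statement only asserts $\BigO{1}$ for an \emph{individual} derivative, which the factorization delivers directly; the efficient simultaneous treatment of all partitions is what I would defer to the dedicated discussion of product rules.
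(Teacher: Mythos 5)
Your per-derivation update is essentially the paper's own argument: writing $\omega=\alpha.\Nt^*_m.\beta$ and $\omega'=\alpha.X.\beta$, both proofs use multiplicativity of $\Pond$ to get $\Pond(\omega')=\Pond(\omega)\Pond(X)/\Pond(\Nt^*_m)$, which is a constant number of operations since $X$ has at most two symbols in CNF (your constant $C$ is just $\Pond(\omega)/\Pond(\Nt^*_m)$). The genuine gap is in the precomputation. You propose to fill the table of elementary weights $\Pond(\Nt_j)$, $j\in[0,n]$, directly from the CNF recurrences, but for a product rule $\Nt\Production\PrG{\Nt'}{\Nt''}$ the entry $\Pond(\Nt_j)=\sum_{i}\Pond(\Nt'_i)\Pond(\Nt''_{j-i})$ is a convolution costing $\Theta(j)$ operations on its own; summed over $j\le n$ this is $\Theta(n^2)$ arithmetic operations, not $\Theta(n)$. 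Having $\Theta(n)$ table \emph{entries} (the grammar being fixed) is not the same as spending $\Theta(n)$ \emph{operations}, and the proposition claims the latter.

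The paper closes this gap with an ingredient your proof does not use: the generating functions of weighted context-free languages are algebraic, hence holonomic, so the coefficients $\Pond(\Nt_j)$ satisfy a linear recurrence with polynomial coefficients of bounded order, which can be determined algorithmically (the paper cites the {\tt Maple} package {\tt GFun}). Once that recurrence is available, each $\Pond(\Nt_j)$ is obtained from a bounded number of previous values in $\BigO{1}$ operations, and the whole table costs $\Theta(n)$ as claimed. Without this (or an equivalent device), your argument only establishes the proposition with a $\Theta(n^2)$ precomputation, which is precisely the bound the proposition is designed to improve upon. A minor remark on your unambiguity discussion: the factorization $\Pond(\Lang{\omega,{\bf n}})=\prod_i\Pond(\Lang{\omega_i,{\bf n}_i})$ is unproblematic once lengths are attached, since the prescribed lengths force a unique split of every word of $\Lang{\omega,{\bf n}}$; where unambiguity genuinely matters is in guaranteeing that the table values $\Pond(\Nt_j)$ count the weight of each word exactly once.
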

		\begin{proof}
		In order to compute $\Pond(\omega,{\bf n})$, we first compute the total weights of languages associated
		with each non-terminal $\Nt$ for all sizes up to $n=\sum_{n_i\in {\bf n}}n_i$, as done by the
		traditional approach~\cite{deniserandom}. This can be done in $\Theta(n)$ arithmetic operations, thanks to
		the holonomic nature of the generating functions at stake. Indeed, the coefficients of an holonomic
		function obey to a linear recurrence with polynomial coefficients, which can be algorithmically determined
		(Using the {\tt Maple} package {\tt GFun}~\cite{SalZim94}, for instance).

		We can in turn use these values to compute \emph{on the fly} the weights  of immature words of interest. Namely, while
		rewriting an immature word $\omega \Eqdef \alpha.\Nt^*_k.\beta$
		into $\omega'\Eqdef\alpha.X.\beta$ through a derivation $\Nt^*_k\Derive X$, the new weight $\Pond(\omega')$ is given by
		$$ \Pond(\omega') =  \Pond(\alpha.X.\beta) = \frac{\Pond(\alpha)\Pond(X)\Pond(\beta)\Pond(\Nt^*_k)}{\Pond(\Nt^*_k)} = \frac{\Pond(\omega)\Pond(X)}{\Pond(\Nt^*_k)}$$
		where $X$ contains at most two terms (CNF) and therefore a constant number of arithmetic operations is involved.
		\end{proof}
	\subsection{A prefix tree for forbidden words}
		\begin{figure*}[t]
				\begin{center}
					 \fbox{\scalebox{0.2}{\input{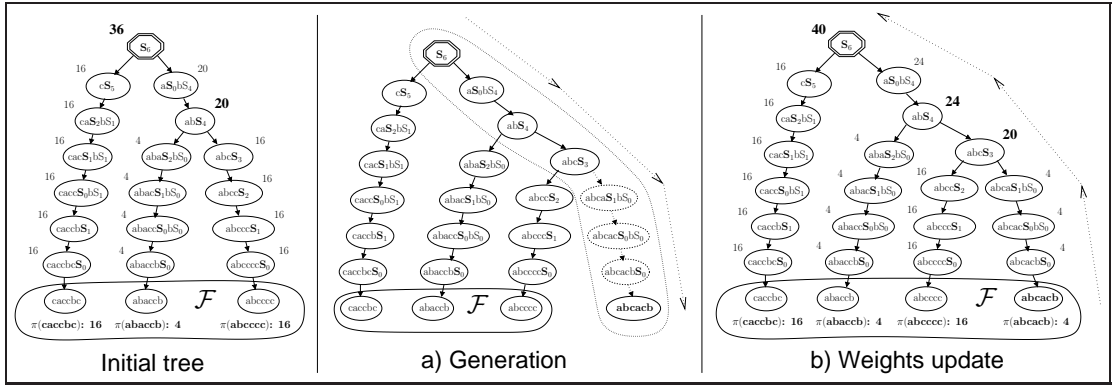}}}
				\end{center}
				\caption{Prefix tree built from the subset $\Forb=\{\mbox{{\bf caccbc}},\mbox{{\bf abaccb}},\mbox{{\bf abcccc}}\}$
				of Motzkin words of size $6$, using weights $\Pond(a)=\Pond(b)=1$ and $\Pond(c)=2$. Generation ({\bf a}) of a new word {\bf abcacb}
				and update ({\bf b}) of the contributions of $\Forb$ to the immature words involved in the generation.}
				\label{fig:PrefixTree}
	\end{figure*}
		\begin{prop}Let $\Forb$ be a set of words characterized by their parse walks\footnote{Starting from an unparsed \emph{raw} set of
		forbidden words $\Forb$ will require some additional parsing, which can be performed in $\Theta(|\Forb|.n^3)$ by a CYK-type algorithm.}.
			Let $\PrefixTree$ be a prefix tree built from the parse walks from $\Forb$ in $\Theta(|\Forb|.n)$ arithmetic operations.
			Then for any immature word $\omega$ reached during a random scenario, the contribution $k_\Pond\Eqdef\Pond(\Lang{\omega}\cap \Forb)$
			of forbidden words can be computed in $\mathcal{O}(1)$ arithmetic operations.
			Furthermore, updating $\PrefixTree$ through the addition of a generated word can be performed in $\Theta(n)$ arithmetic operations.
		\end{prop}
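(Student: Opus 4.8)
The plan is to construct the prefix tree $\PrefixTree$ explicitly and maintain, at each node, enough weight information to answer both the contribution queries and the updates in the claimed complexities. First I would describe the data structure: each word of $\Forb$ is given by its parse walk, i.e.\ the sequence of immature words $\Axiom_n = \omega^{(0)} \Derive \omega^{(1)} \Derive \cdots \Derive \omega^{(\ell)} = w$ produced under the fixed derivation policy $\DerPol$. Since the policy is deterministic, two forbidden words sharing a common prefix of derivations pass through exactly the same sequence of immature words up to their first point of divergence; this is precisely what makes the collection of parse walks a \emph{tree} rather than an arbitrary DAG. Each node of $\PrefixTree$ is thus identified with an immature word $\omega$ reachable from $\Axiom_n$ along some forbidden parse walk, and I would store at that node the value $k_\Pond(\omega) \Eqdef \Pond(\Lang{\omega}\cap\Forb)$, namely the total weight of those forbidden words whose parse walk passes through $\omega$. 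The leaves are the mature forbidden words themselves, carrying weight $\Pond(w)$.

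Building the tree costs $\Theta(|\Forb|\cdot n)$ because each parse walk has length $\Theta(n)$ (each atomic derivation either consumes a terminal or splits a size, and the total size is $n$), so inserting all $|\Forb|$ walks touches $\Theta(|\Forb|\cdot n)$ nodes; the weight $k_\Pond$ at each node is obtained by summing the leaf weights below it, which is a single bottom-up pass over the same set of nodes and hence also $\Theta(|\Forb|\cdot n)$ arithmetic operations. For the query, I would observe that during a random scenario the algorithm only ever reaches immature words $\omega$ that are atomic derivatives of the current word, and the current word is always a node already visited in $\PrefixTree$ (or a descendant thereof). By keeping a pointer to the current node and precomputing, at each node, the $k_\Pond$ values of its children, the contribution $k_\Pond \Eqdef \Pond(\Lang{\omega}\cap\Forb)$ of any candidate derivative $\omega$ is read off in $\mathcal{O}(1)$: if $\omega$ corresponds to an existing child, return the stored value; otherwise $\Lang{\omega}\cap\Forb=\emptyset$ and the contribution is $0$. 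This is exactly what the probabilities in Algorithm $\AlgW$ require, since the terms $k^\Pond$, $k^\Pond_{\Nt'}$, $k^\Pond_i$ are precisely the stored child-contributions.

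For the update, the key remark is that once a new word $w$ has been generated, it must be added to $\Forb$, which amounts to inserting its parse walk into $\PrefixTree$ and decrementing the stored contributions. Because the generated word followed a unique parse walk of length $\Theta(n)$, and the policy is deterministic, its walk shares a prefix with existing branches and then either extends an existing leaf or creates a new branch; either way only the $\Theta(n)$ nodes along that single walk have their $k_\Pond$ value changed, each by the single additive term $\Pond(w)$. Updating these values is therefore $\Theta(n)$ arithmetic operations, as claimed. The main obstacle, and the point I would be most careful about, is justifying that the tree structure is genuinely shared — that is, that distinct forbidden words never force two separate copies of the same immature-word node, and conversely that every immature word the generator queries coincides with a node or a trivial (empty-contribution) non-node. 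Both hinge on the determinism of $\DerPol$ and the one-to-one correspondence between mature words and parse walks guaranteed by unambiguity (the Definition preceding Figure~\ref{fig:MotzkinWalks}); I would make this correspondence explicit before invoking it, since it is what collapses the $\mathcal{O}(1)$ query and $\Theta(n)$ update costs out of what would otherwise be a costly recomputation of language intersections.
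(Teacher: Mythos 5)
Your proposal is correct and takes essentially the same route as the paper's proof: a prefix tree over the parse walks of $\Forb$, with each node storing the total weight of forbidden words below it (justified, as in the paper, by unambiguity/determinism of the derivation policy, so that a mature word lies in $\Lang{\omega}$ iff $\omega$ is on its parse walk), $\mathcal{O}(1)$ queries obtained by reading the stored values at the children of the current node (with contribution $0$ once the walk leaves the tree), and a $\Theta(n)$ update that inserts the new walk and propagates $\Pond(w)$ along it --- exactly the paper's top-down/bottom-up two-phase update. The only blemish is the word ``decrementing'': adding a generated word to $\Forb$ \emph{increases} each stored $k_\Pond$ along its walk by $\Pond(w)$, as your own following sentence correctly states.
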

		\begin{proof}
		Thanks to the unambiguity of the grammar, a mature word $v\in\mathcal{L}$ belongs to the
		language generated from an immature one $\omega\in\Imm$ iff $\omega$ is found on the parse
		walk of $v$. Therefore we gather the parse walks associated with forbidden words $\Forb$
		into a \Def{prefix tree}, additionally storing at each node the total weight of all words
		accessible from it. It is then possible to traverse the prefix tree during the generation,
		read the values of $k_\Pond$ for candidate derivations on the children of the current
		node.

		The update of the prefix tree after a generation can be performed in $\Theta(n)$ arithmetic
		operations in two phases, as illustrated by Figure~\ref{fig:PrefixTree}: First, a \emph{top-down} stage adds nodes
		for each immature word traversed during the generation ({\bf a}); Then,
		a \emph{bottom-up} stage will propagate the weight of the sampled mature word to his ancestors ({\bf b}).
		\end{proof}

		Finally, it is remarkable that many of the internal nodes in the trees have degree $1$, which
		means that their value for $k_\Pond$ is just a copy of their child's own. Since in any tree the
		number of internal nodes of degree at least two is strictly smaller than the total number of
		leaves, then the number of \emph{different} values for $k_\Pond$ is bounded by $2*|\Forb|$.
		It follows that the memory needed to store the prefix tree will scale like $\mathcal{O}(n|\Forb|)$,
		even if the encoding of each $k_\Pond$ requires $\Theta(n)$ bits of storage.
	\subsection{Miscellaneous}
		For point~\ref{alg:Boustrophedon}, we can use the so-called Boustrophedon strategy, which allows
		for an $\mathcal{O}(n\log(n))$ arithmetic operations generation in the worst case scenario. Since
		we only restrict the generation set to authorized (or not previously generated) words, such a property
		should hold in our case.

		For point~\ref{alg:Floats}, it is reasonable, for all practical purpose, to assume that the weights are going to be
		expressed as rational numbers. Multiplying these weights by the
		least common multiple of their denominators yields a new set of integral weights inducing the same probability
		distribution, thus arbitrary precision integers can be used. The numbers will scale like $\mathcal{O}(\alpha^n)$
		for some explicit $\alpha$, since the resulting language is context-free, and operations performed on such numbers will
		take time $\mathcal{O}(n\log(n)\log\log(n))$~\cite{Van02}, while the space occupied by their encoding is in $\mathcal{O}(n)$.

	\subsection{Summary}
			Let $n\in\mathbb{N}^+$ be the overall length for generated words, $k\in\mathbb{N}^+$ the number
			of distinct generated words and $\Forb$ the initial set of forbidden parse walks:
			\begin{itemize}
			\item The {\bf time-complexity} of Algorithm $\AlgW$ is in $\Theta(kn\log(n))$ {\bf arithmetic operations} in the worst case
			scenario, after a pre-processing in $\Theta(|\Forb|n + n)$ {\bf arithmetic operations}.
			\item The {\bf memory complexity} is in $\Theta(n)$ {\bf numbers} for the pre-processing, plus $\Theta((|\Forb|+k)n)$ {\bf bits} for the storage
			of the prefix tree.
			\item For rational-valued weights, using arbitrary arithmetics, the associated bit-complexities are in respectively
			$\Theta(kn^2\log(n))$ for time and $\Theta((|\Forb|+k+n)n)$ for memory.
			\item Lastly, starting from an empty forbidden set $\Forb=\emptyset$ yields a generation in $\Theta(kn^2\log(n))$ for time and
			$\Theta(kn+n^2)$ for memory, complexities similar to that of the \Def{possibly redundant} traditional
			approach~\cite{deniserandom,flajoletcalculus}.
			\end{itemize}

\section{Conclusion and perspectives}
	We addressed the random generation of non-redundant sets of sequences from context-free languages,
	while avoiding a predefined set of words. We first investigated the efficiency of a rejection
	approach. Such an approach was found to be acceptable in the uniform case. By contrast, for weighted
	languages, we showed that for some languages the expected number of rejections would grow exponentially
	on the desired number of generated sequences. Furthermore, we showed that in typical context-free languages and
	for fixed length, the probability distribution can be dominated by a small number of sequences.
	We proposed an alternative algorithm solution for this problem, based on the so-called recursive approach. The correctness of the
	algorithm was demonstrated, and its efficient implementation discussed. This algorithm was showed to achieve
	the generation of a non-redundant set of $k$
	structures with a time-complexity in $\mathcal{O}(kn^2\log(n))$, while using $\mathcal{O}(kn+n^2)$ bits of storage.
	These complexities hold in the worst-case scenario, are almost unaffected by to the weights function used, and are equivalent to that
	of the traditional, possibly redundant, generation of $k$ words using previous approaches.

	One natural extension of the current work concerns the random generation of the more general class of decomposable
	structures~\cite{flajoletcalculus}. Indeed, such aspects like the \emph{pointing} and \emph{unpointing} operator
	are not explicitly accounted for in the current work. Furthermore, the generation of labeled structures might
	be amenable to similar techniques in order to avoid a redundant generation. It is unclear however how
	to extend the notion of parse tree in this context. Isomorphism issues might arise, for instance while using
	the \emph{unranking} operator.

	Following the remark that the random generation from reasonable specifications is a \emph{numerically
	stable problem}~\cite{DeZi99}, we could envision using arbitrary precision arithmetics to achieve a
	$\mathcal{O}(kn^{1+\varepsilon})$ complexity. Such a feature could accelerate an implementation of this
	algorithm, for instance in the software {\tt GenRGenS}~\cite{PoTeDe06} that already supports
	the formalism of weighted grammars. Another direction for an efficient implementation of this approach
	would be to investigate the use of Boltzmann samplers~\cite{fullboltz}.

	Moreover, the influence of the number of desired sequences, the length and the weights over the complexity
	of a rejection based approach deserves to be further characterized. Namely, are there \emph{simple-type} grammars
	giving rise to an exponential complexity on $k$ ? Can \emph{phase transition}-like phenomena be observed for varying
	weights ?

\section*{Acknowledgements}
	The author would like to thank A. Denise, for his many helpful comments and suggestions,
	and C. Herrbach, for asserting that some parts of the document actually made some sense.
\bibliographystyle{amsplain}
\bibliography{biblio}
\end{document}